\tikzset{
->, 
node distance=3cm, 
every state/.style={thick, fill=gray!10}, 
initial text=$ $, 
}
\newcommand{\subalign}[1]{%
  \vcenter{%
    \Let@ \restore@math@cr \default@tag
    \baselineskip\fontdimen10 \scriptfont\tw@
    \advance\baselineskip\fontdimen12 \scriptfont\tw@
    \lineskip\thr@@\fontdimen8 \scriptfont\thr@@
    \lineskiplimit\lineskip
    \ialign{\hfil$\m@th\scriptstyle##$&$\m@th\scriptstyle{}##$\crcr
      #1\crcr
    }%
  }
}
\definecolor{green}{rgb}{0.1,0.7,0.1}
\newcommand{\algorithmicstop}{\textbf{stop}}
\newcommand{\STOP}{\STATE \algorithmicstop}
\newtheorem{theorem}{Theorem}[section]
\newtheorem{corollary}{Corollary}[section]
\newtheorem{definition}{Definition}[section]
\newtheorem{example}{Example}[section]
\newtheorem{remark}{Remark}[section]
\newcounter{enumi_saved}
\title{\LARGE Verification of Strong $K$-Step Opacity for Discrete-Event Systems
}
\author{Xiaoguang Han, Kuize Zhang, and Zhiwu Li 
\thanks{This work was supported in part by the National Natural Science Foundation of China under Grant No. 61903274 and the Alexander von Humboldt Foundation. Submitted to IEEE CDC on March 28, 2022.
}
\thanks{Xiaoguang Han is with the College of Electronic Information and Automation, Tianjin University of Science and Technology, Tianjin 300222, China. {\tt\small hxg-allen@163.com.}
}
\thanks{Kuize Zhang is with Control Systems Group, Technical University of Berlin, Berlin 10587. {\tt\small kuize.zhang@campus.tu-berlin.de.}
}
\thanks{Zhiwu Li is with the Institute of Systems Engineering, Macau University of Science and Technology, Taipa 519020, China, and School of Electro-Mechanical Engineering, Xidian University, Xi'an 710071, China. {\tt\small zhwli@xidian.edu.cn.}
}
}
\begin{document}

\maketitle
\thispagestyle{empty}
\pagestyle{empty}

\begin{abstract}
  In this paper, we revisit the verification of strong $K$-step opacity ($K$-SSO) for partially-observed discrete-event systems modeled as nondeterministic finite-state automata.
  As a stronger version of the standard $K$-step opacity, $K$-SSO requires that an intruder cannot make sure whether or not a secret state has been visited within the last $K$ observable steps.
  To efficiently verify $K$-SSO, we propose a new concurrent-composition structure, which is a variant of our previously-proposed one.
  Based on this new structure, we design an algorithm for deciding $K$-SSO and prove that the proposed algorithm not only reduces the time complexity of the existing algorithms, but also does not depend on the value of $K$.
  Furthermore, a new upper bound on the value of $K$ in $K$-SSO is derived, which also reduces the existing upper bound on $K$ in the literature.
  Finally, we illustrate the proposed algorithm by a simple example.
\end{abstract}

\section{Introduction}\label{sec1}

Opacity is a concealment property, which requires that the secret information of a system cannot be distinguished from its non-secret information to a passive observer (called an intruder) who completely knows the system's structure but has only limited observations of its behavior.
In other words, an opaque system always holds the plausible deniability for its ``secrets" during its execution.
Opacity adapts to the characteristics of a variety of security and privacy requirements in diverse dynamic systems, including event-driven systems~\cite{Lafortune(2018),Hadjicostis(2020)}, time-driven systems~\cite{An(2020),Ramasubramanian(2020)}, and metric systems~\cite{Yin(2021)}.

The notion of opacity initially appeared in the computer science literature~\cite{Mazare(2004)} for analyzing cryptographic protocols.
Whereafter, various versions of opacity were introduced in the context of discrete-event systems (DES), including Petri nets~\cite{Bryans(2005)}, labelled transition systems~\cite{Bryans(2008)}, automata~\cite{Badouel(2007)}, etc.
For details see the recent surveys~\cite{Lafortune(2018),Jacob(2016)} and the textbook~\cite{Hadjicostis(2020)}.
Note that, in the literature the secrets of a system are modeled by two ways: 1) a set of secret states, and 2) a set of secret behaviors/traces.
For the former, opacity is referred to as state-based (e.g.,~\cite{Bryans(2005)}), while for the latter, opacity is referred to as language-based (e.g.,~\cite{Bryans(2008),Badouel(2007)}).

In automata-based formalisms, different notions of opacity were proposed in the literature, including current-state opacity (CSO)~\cite{Saboori(2007)}, initial-state opacity (ISO)~\cite{Saboori(2013)}, $K$-step opacity ($K$-SO)~\cite{Saboori(2011a)}, infinite-step opacity (Inf-SO)~\cite{Saboori(2012)}\footnote{For convenience, the notion originally named CSO (resp., ISO, $K$-SO, and Inf-SO) is categorized as standard CSO (resp., standard ISO, standard $K$-SO, and standard Inf-SO) in this paper.}, and language-based opacity (LBO)~\cite{Bryans(2008),Badouel(2007),Lin(2011)}. Some more efficient algorithms to check them have also been provided in~\cite{Wu(2013)}--\cite{Lan(2020)}.
In particular, it was proven that the above-mentioned five versions of opacity could be reduced to each other in polynomial time when LBO is restricted the special case that the secret languages are regular (cf.,~\cite{Wu(2013),Balun(2021),Balun(2022)}), while LBO is generally undecidable in finite-state automata with $\epsilon$-labeling functions (cf.,~\cite{Bryans(2008)}).
Furthermore, when a system is not opaque, a natural question to ask that ``how can one makes it opaque?
This is \emph{opacity enforcement problem}, which has been extensively investigated using a variety of techniques, including supervisory control~\cite{Dubreil(2010)}--\cite{Tong(2018)}, insertion or edit functions~\cite{Ji(2018)}--\cite{Yin(2020)}, dynamic observers~\cite{Zhang(2015)}, subobserver relationship~\cite{Moulton(2022)}, etc.
In addition, verification and/or enforcement of opacity have been extended to other classes of models, see, e.g.,~\cite{Tong(2017)}--\cite{Hou(2022)}.
Some applications of opacity in real-world systems have also been provided in the literature, see, e.g.,~\cite{Saboori(2011b)}--\cite{Lin(2020)}.

Among various notions of opacity, the standard CSO characterizes that an intruder cannot make sure whether a system is currently in a secret state.
In Location-Based Services (LBS), however, a user may want to hide his/her initial location or his/her location (e.g., visiting a hospital or bank) at some specific previous instant.
Such requirements can be characterized by the standard ISO and $K$/Inf-SO.
Note that, as mentioned in~\cite{Falcone(2015)}--\cite{Han(2022)}, these four standard versions of opacity have some limitations in practice.
Specifically, they cannot capture the situation that an intruder can never infer for sure whether a system has passed through a secret state based on his/her observations.
In other words, even though a system is ``\emph{opaque}" in the standard sense, the intruder may necessarily determine that a secret state must have been passed through.
To this end, in~\cite{Falcone(2015)}, a strong version of the standard $K$-SO called \emph{strong $K$-step opacity} ($K$-SSO) was proposed to capture that the visit of a secret state cannot be inferred within the last $K$ observable steps.
Inspired by~\cite{Falcone(2015)}, the notion of $K$-SSO was extended to \emph{strong infinite-step opacity} (Inf-SSO) in~\cite{Ma(2021)}, which is a strong version of the standard Inf-SO.
Accordingly, two algorithms have been provided to verify $K$-SSO and Inf-SSO using the so-called $K$-step recognizer and $\infty$-step recognizer, respectively.
In particular, the algorithm for verifying $K$-SSO reduces time complexity of that in~\cite{Falcone(2015)}.

Recently, in our previous work~\cite{Han(2022)}, two strong versions of the standard CSO and ISO, called \emph{strong current-state opacity} (SCSO) and \emph{strong initial-state opacity} (SISO), were proposed in nondeterministic finite-state automata, respectively.
Further, we developed a new methodology to simultaneously verify SCSO, SISO, and Inf-SSO using a concurrent-composition technique.
We also proved that the time complexity of the algorithm designed in~\cite{Han(2022)} for verifying Inf-SSO is lower than that in~\cite{Ma(2021)}.
Motivated by the results obtained in~\cite{Han(2022)}, in this paper, we proposed a new concurrent-composition structure that is a variant of that of~\cite{Han(2022)} to do more efficient verification for $K$-SSO compared with the results in~\cite{Falcone(2015),Ma(2021)}.
The main contributions of this paper are as follows.
\begin{itemize}
\item Although the concurrent-composition structure proposed in~\cite{Han(2022)} can determine SCSO, SISO, and Inf-SSO, it cannot be
  directly used to check $K$-SSO.
  To this end, we propose a new concurrent-composition structure to determine $K$-SSO, which is a variant of that of~\cite{Han(2022)}.
  Based on the proposed new structure, we design an improved algorithm for deciding $K$-SSO, which reduces the time complexity from $\mathcal{O}(|\Sigma_o||\Sigma_{uo}||X|^{2}2^{(K+2)|X|})$ proposed in~\cite{Ma(2021)} to
  $\mathcal{O}((|\Sigma_o||\Sigma_{uo}|+|\Sigma|)|X|^{2}2^{|X|})$, i.e., the proposed algorithm does not depend on the value of $K$.
\item Using the proposed concurrent-composition structure, a new upper bound on the value of $K$ in $K$-SSO is derived, i.e., $|\hat{X}|2^{|X\backslash X_S|}-1$.
  We prove that a system is Inf-SSO if and only if it is $(|\hat{X}|2^{|X\backslash X_S|}-1)$-SSO.
  This also reduces the existing upper bound $|X|(2^{|X|}-1)$ on $K$ derived in~\cite{Ma(2021)} when the size of a system is relatively large.
\end{itemize}

The rest of this paper is arranged as follows.
Section~\ref{sec2} provides some basic notions needed in this paper.
In Section~\ref{sec3}, a new concurrent-composition structure is proposed, based on which a more efficient verification algorithm for $K$-SSO is designed, as well as an improved upper bound on the value of $K$ in $K$-SSO.
Finally, we conclude this paper in Section~\ref{sec4}.

\section{Preliminaries}\label{sec2}

A \emph{nondeterministic finite-state automaton} (NFA) is a quadruple structure $G=(X,\Sigma,\delta, X_0)$, where $X$ is a finite set of states, $\Sigma$ is a finite set of events, $X_0\subseteq X$ is a set of initial states, $\delta: X\times\Sigma\rightarrow 2^{X}$ is the transition function, which depicts the system dynamics: given states $x,y\in X$ and an event $\sigma\in \Sigma$, $y\in\delta(x,\sigma)$ implies that there exists a transition labeled by $\sigma$ from $x$ to $y$.
We can extend the transition function to $\delta :X\times\Sigma^{\ast}\rightarrow 2^{X}$ in the recursive manner, where $\Sigma^{\ast}$ denotes the \emph{Kleene closure} of $\Sigma$, consisting of all finite sequences composed of the events in $\Sigma$ (including the empty sequence $\epsilon$). Details can be found in~\cite{Hopcroft(2001)}.
We use $\mathcal{L}(G,x)$ to denote the language generated by $G$ from state $x$, i.e., $\mathcal{L}(G,x)=\{s\in\Sigma^{\ast}: \delta(x,s)\neq\emptyset\}$.
Therefore, the language generated by $G$ is $\mathcal{L}(G)=\cup_{x_0\in X_0}\mathcal{L}(G,x_0)$.
$G$ is called a deterministic finite-state automation (DFA) if $|X_0|=1$ and $|\delta(x,\sigma)|\leq 1$ for all $x\in X$ and all $\sigma\in\Sigma$.
When $G$ is deterministic, $\delta$ is also regarded as a partial transition function $\delta: X\times\Sigma^{\ast}\rightarrow X$.
For a sequence $s\in\mathcal{L}(G)$, we denote its length by $|s|$ and its prefix closure by $Pr(s)$, i.e., $Pr(s)=\{w\in\mathcal{L}(G): (\exists w^\prime\in\Sigma^{\ast})[ww^\prime=s]\}$.
Further, for a prefix $w\in Pr(s)$, we use the notation $s/w$ to denote the suffix of $s$ after its prefix $w$.

In this paper, a DES of interest is modeled as an NFA $G$.
As usual, we assume that the intruder can only see partially the behavior of $G$.
To this end, $\Sigma$ is partitioned into the set $\Sigma_o$ of observable events and the set $\Sigma_{uo}$ of unobservable events, i.e., $\Sigma_o\cup\Sigma_{uo}=\Sigma$ and $\Sigma_o\cap\Sigma_{uo}=\emptyset$.
The natural projection $P:\Sigma^{\ast}\rightarrow \Sigma_o^{\ast}$ is defined recursively by
($i$) $P(\epsilon)=\epsilon$,
($ii$) $P(s\sigma)=P(s)\sigma,\mbox{ if } \sigma\in \Sigma_o$,
and ($iii$) $P(s\sigma)=P(s),\mbox{ if } \sigma\in \Sigma_{uo}$, where $s\in\Sigma^\ast$.
We extend the natural projection $P$ to $\mathcal{L}(G)$ by $P(\mathcal{L}(G))=\{P(s)\in\Sigma_{o}^{\ast}: s\in\mathcal{L}(G)\}$, see, e.g.,~\cite{Cassandras(2008)} for details.
Without loss of generality, we assume that system $G$ is accessible, i.e., all its states are reachable from $X_0$.
A state $x\in X$ is called \emph{$K$-step observationally reachable} if there exists an initial state $x_0\in X_0$ and a sequence $s\in\mathcal{L}(G,x_0)$ such that
$x\in\delta(x_0,s)$ and $|P(s)|=K$, where $K\in\mathbb{N}$ is a natural number.

To study the verification of strong $K$-step opacity of $G=(X,\Sigma,\delta,X_0)$, we assume that $G$ has a set of secret states, denoted by $X_{S}\subseteq X$.
Then, $X_{NS}=X\backslash X_S$ is the set of non-secret states.
Consider an $n$-length sequence $s=s_{1}s_{2}\ldots s_{n}\in\Sigma^{\ast}$, $x_0\in X_0$, and $x_i\in X$, $i=1,2,\ldots,n$,
if $x_{k+1}\in\delta(x_k,s_{k+1})$, $0\leq k\leq n-1$, we call $x_0\stackrel{s_1}{\rightarrow}x_1\stackrel{s_2}{\rightarrow}x_2\stackrel{s_3}{\rightarrow}\cdots\stackrel{s_n}{\rightarrow}x_n$ a \emph{run} generated by $G$ from $x_0$ to $x_n$ under $s$.
For brevity, we write $x_0\stackrel{s}{\rightarrow}x_n$ (resp., $x_0\stackrel{s}{\rightarrow}$) when $x_1,x_2,\ldots,x_{n-1}$ (resp., $x_1,x_2,\ldots,x_n$) are not specified.
Note that $x_0\stackrel{s}{\rightarrow}x_n$ (resp., $x_0\stackrel{s}{\rightarrow}$) may denote more than one run based on the nondeterminism of $G$, which depends on the context.
A run $x_0\stackrel{s_1}{\rightarrow}x_1\stackrel{s_2}{\rightarrow}x_2\stackrel{s_3}{\rightarrow}\cdots\stackrel{s_n}{\rightarrow}x_n$ (resp., $x_0\stackrel{s_1}{\rightarrow}x_1\stackrel{s_2}{\rightarrow}x_2\stackrel{s_3}{\rightarrow}\cdots$), abbreviated as $x_0\stackrel{s}{\rightarrow}x_n$ (resp., $x_0\stackrel{s}{\rightarrow}$), is called \emph{non-secret} if $x_i\in X_{NS}$, $i=0,1,2,\cdots,n$ (resp., $i=0,1,2,\cdots$).

\section{Verification of strong $K$-step opacity}\label{sec3}

\subsection{Notion of strong $K$-step opacity }\label{subsec3.1}

In~\cite{Falcone(2015)}, the authors proposed a notion of strong $K$-step opacity ($K$-SSO) for a DFA $G$.
Specifically, $G$ is said to be strongly $K$-step opaque ($K$-SSO)\footnote{In this paper, the terminology ``$K$-SSO" is the acronym of both ``strong $K$-step opacity" and ``strongly $K$-step opaque", which depends on the context.} w.r.t. $\Sigma_o$ and $X_{S}$ if for all $st\in\mathcal{L}(G,x_0)$ such that $\delta(x_0,s)\in X_S$ and $|P(t)|\leq K$,
there exists $w\in\mathcal L(G,x_0)$ such that $P(w)=P(st)$ and for all $\bar{w}\in Pr(w)$, if $|P(w/\bar{w})|\leq K$, then $\delta(x_0,\bar{w})\notin X_S$.
In this subsection, we reformulate the definition of $K$-SSO in nondeterministic finite-state automata.
And then, we do $K$-SSO verification using the proposed concurrent-composition structure, which reduces the (worst-case) time complexity of the previous algorithms in~\cite{Falcone(2015),Ma(2021)}.

\begin{definition}[$K$-SSO]\label{de:3.1}
Given a system $G=(X,\Sigma,\delta,$ $X_0)$, a projection map $P$ w.r.t. the set $\Sigma_o$ of observable events, and a set $X_{S}\subseteq X$ of secret states,
$G$ is said to be strongly $K$-step opaque ($K$-SSO) w.r.t. $\Sigma_o$ and $X_{S}$ (where $K\in\mathbb{N}$) if
\begin{align}\label{eq:3.1}
&(\forall\mbox{ run } x_0\stackrel{s}{\rightarrow}x_s\stackrel{t}{\rightarrow}x_t: x_0\in X_0\wedge x_s\in X_s\wedge|P(t)|\leq K)\nonumber\\
&(\exists\mbox{ run }x^\prime_0\stackrel{s^\prime}{\rightarrow}{x^\prime_s}\stackrel{t^\prime}{\rightarrow}{x^\prime_t})[(x^\prime_0\in X_0)\wedge(P(s^\prime)=P(s))\wedge\nonumber\\
&(P(t^\prime)=P(t))\wedge({x^\prime_s}\stackrel{t^\prime}{\rightarrow}{x^\prime_t}\mbox{ is non-secret})].
\end{align}
\end{definition}

\begin{remark}\label{re:3.1}
Obviously, $K$-SSO in Definition~\ref{de:3.1} is more general than the notion in~\cite{Falcone(2015)} or~\cite{Ma(2021)}.
In plain words, if a system is $K$-SSO, then an external intruder cannot make sure whether the system is/was in a secret state within the last $K$ observable steps.
Compared with the standard $K$-SO proposed in~\cite{Saboori(2011a)}, $K$-SSO has a higher-level confidentiality.
In other words, $K$-SSO implies the standard $K$-SO, but the converse is not true.
\end{remark}

\subsection{Structure of concurrent composition}\label{subsec3.2}

In this subsection, we propose a new information structure using a concurrent-composition approach to verify $K$-SSO in Definition~\ref{de:3.1}.
Note that, the proposed information structure is a variant of that proposed in~\cite{Han(2022)}.
Later on, we will show that the time complexity of using the proposed concurrent-composition structure to verify $K$-SSO is lower than those in~\cite{Falcone(2015),Ma(2021)} and our proposed algorithm does not depend on the value of $K$.
In order to present this structure, we need to introduce the notions of \emph{initial-secret subautomaton} and \emph{non-secret subautomaton}.

Given a system $G=(X,\Sigma,\delta,X_0)$ and a set $X_S\subseteq X$ of secret states.
We first recall the notion of standard \emph{subset construction} of $G$ called an \emph{observer}, which is defined by
\begin{equation}\label{eq:3.2}
Obs(G)=(X_{obs},\Sigma_{obs},\delta_{obs},X_{obs,0}),
\end{equation}
where $X_{obs}\subseteq 2^X\backslash\{\emptyset\}$ stands for the set of states,
$\Sigma_{obs}=\Sigma_o$ stands for the set of observable events,
$\delta_{obs}: X_{obs}\times\Sigma_{obs}\rightarrow X_{obs}$ stands for the (partial) deterministic transition function defined as follows: for any $q\in X_{obs}$ and $\sigma\in\Sigma_{obs}$,
we have $\delta_{obs}(q,\sigma)=\{x^{\prime}\in X:\exists x\in q, \exists w\in\Sigma_{uo}^\ast\mbox{ s.t. }x^{\prime}\in\delta(x,\sigma w)\}$ if it is nonempty,
$X_{obs,0}=\{x\in X:\exists x_0\in X_0, \exists w\in\Sigma_{uo}^\ast\mbox{ s.t. }x\in\delta(x_0,w)\}$ stands for the (unique) initial state.
For brevity, we only consider the accessible part of observer $Obs(G)$.
We refer the reader to~\cite{Cassandras(2008)} for details on $Obs(G)$.

\begin{remark}\label{re:3.2}
According to Definition~\ref{de:3.1}, $K$-SSO reduces to the standard CSO when $K=0$.
Therefore, $0$-SSO can be determined using the observer $Obs(G)$.
Specifically, $G$ is $0$-SSO if and only if there exists no reachable state $q\in X_{obs}$ such that $q\subseteq X_S$, see~\cite{Saboori(2007)} for details.
Hence, from now on we always assume $K\geq 1$ when we use the following proposed concurrent-composition approach to determine $K$-SSO.
\end{remark}

To study verification of $K$-SSO ($K\geq 1$), the state set $X_{obs}$ of $Obs(G)$ is partitioned into three disjoint parts: $X^s_{obs}$, $X^{ns}_{obs}$, and $X^{hyb}_{obs}$, i.e., $X_{obs}=X^s_{obs}\cup X^{ns}_{obs}\cup X^{hyb}_{obs}$, where $X^s_{obs}=\{q\in X_{obs}:q\subseteq X_S\}$, $X^{ns}_{obs}=\{q\in X_{obs}:q\subseteq X_{NS}\}$, and $X^{hyb}_{obs}=\{q\in X_{obs}:q\cap X_S\neq\emptyset\wedge q\cap X_{NS}\neq\emptyset\}$.
Note that the superscript ``\emph{hyb}" of $X^{hyb}_{obs}$ stands for the acronym of ``\emph{hybrid}".

Now we construct a subautomaton of $G$ called an \emph{initial-secret subautomaton}, denoted by
\begin{equation}\label{eq:3.3}
\hat{G}=(\hat{X},\hat{\Sigma},\hat{\delta},\hat{X}_0),
\end{equation}
which is obtained from $G$ by: 1) replacing its initial state set $X_0$ with $\hat{X}_0=X_S$, and 2) computing the part of $G$ reachable from $\hat{X}_0$ as $\hat{G}$.
Note that $\hat{G}$ can be computed from $G$ in time linear in the size of $G$.
In particular, when $G$ is deterministic, the time complexity of computing $\hat{G}$ reduces to $\mathcal{O}(|\Sigma||X|)$.

We also construct another subautomaton of $G$ called a \emph{non-secret subautomaton}, denoted by
\begin{equation}\label{eq:3.4}
\tilde{G}=(\tilde{X},\tilde{\Sigma},\tilde{\delta},\tilde{X}_0),
\end{equation}
whose set of initial states is defined as $\tilde{X}_0=\{x\in X:\exists q\in X^{hyb}_{obs}\mbox{ s.t. }x\in q\cap X_{NS}\}$.
And then in $G$ we delete all secret states and compute the part of $G$ reachable from $\tilde{X}_0$ as $\tilde{G}$.
The time complexity of computing $\tilde{G}$ from $G$ is exponential in the size of $G$.
Note that when we delete a secret state, all transitions attached to that state are also deleted.

Next, we construct a new observer of $\tilde{G}$, denoted by $\tilde{G}_{obs}$, which is a minor variant of standard observer $Obs(\tilde{G})$.
The unique difference between them is the set of initial states.
Specifically, we construct $\tilde{G}_{obs}=(\tilde{X}_{obs},\tilde{\Sigma}_{obs},\tilde{\delta}_{obs},\tilde{X}_{obs,0})$, where the initial state set is $\tilde{X}_{obs,0}=\{X_{NS}\cap q:q\in X^{hyb}_{obs}\}$.

Based on the above preparation, we propose an information structure called the \emph{concurrent composition} of $\hat{G}$ and $\tilde{G}_{obs}$, which will be used to verify $K$-SSO in Definition~\ref{de:3.1}.

\begin{definition}[Concurrent Composition]\label{de:3.2}
Given a system $G=(X,\Sigma,\delta,X_0)$ and a set $X_{S}\subseteq X$ of secret states, the concurrent composition of $\hat{G}$ and $\tilde{G}_{obs}$ is an NFA
\begin{equation}\label{eq:5}
Cc(\hat{G},\tilde{G}_{obs})=(\hat{X}_{cc},\hat{\Sigma}_{cc},\hat{\delta}_{cc},\hat{X}_{cc,0}),
\end{equation}
where
\begin{itemize}
  \item $\hat{X}_{cc}\subseteq \hat{X}\times 2^{\tilde{X}}$ stands for the set of states;
  \item $\hat{\Sigma}_{cc}=\{(\sigma,\sigma): \sigma\in\hat{\Sigma}_o\}\cup\{(\sigma,\epsilon): \sigma\in\hat{\Sigma}_{uo}\}$ stands for the set of events;
  \item $\hat{\delta}_{cc}: \hat{X}_{cc}\times\hat{\Sigma}_{cc}\rightarrow 2^{\hat{X}_{cc}}$ is the transition function defined as follows: for any state $(x,q)\in\hat{X}_{cc}$ and any event $\sigma\in\hat{\Sigma}$,
  \begin{itemize}
  \item [(i)] when $q\neq\emptyset$,\\
    (a) if $\sigma\in\hat{\Sigma}_o$, then
   \begin{equation*}
   \begin{split}
  & \hat{\delta}_{cc}((x,q),(\sigma,\sigma))=\{(x^\prime,q^\prime): x^\prime\in\hat{\delta}(x,\sigma)\wedge\\
  & q^\prime=\tilde{\delta}_{obs}(q,\sigma)\mbox{ if }\tilde{\delta}_{obs}(q,\sigma) \mbox{ is well-defined}, q^\prime=\emptyset \\
  & \mbox{otherwise}\};
   \end{split}
   \end{equation*}
    (b) if $\sigma\in\hat{\Sigma}_{uo}$, then
   \begin{equation*}
   \hat{\delta}_{cc}((x,q),(\sigma,\epsilon))=\{(x^\prime,q): x^\prime\in\hat{\delta}(x,\sigma)\};
  \end{equation*}
  \item [(ii)] When $q=\emptyset$,\\
     (a) if $\sigma\in\hat{\Sigma}_o$, then
   \begin{equation*}
   \hat{\delta}_{cc}((x,\emptyset),(\sigma,\sigma))=\{(x^\prime,\emptyset): x^\prime\in\hat{\delta}(x,\sigma)\};
   \end{equation*}
     (b) if $\sigma\in\hat{\Sigma}_{uo}$, then
   \begin{equation*}
   \hat{\delta}_{cc}((x,\emptyset),(\sigma,\epsilon))=\{(x^\prime,\emptyset): x^\prime\in\hat{\delta}(x,\sigma)\};
  \end{equation*}
  \end{itemize}
  \item $\hat{X}_{cc,0}=\{(x,y): (\exists q\in X^{hyb}_{obs})\mbox{ s.t. }[(x\in X_S\cap q)\wedge(y=X_{NS}\cap q)]\}\subseteq\hat{X}_0\times\tilde{X}_{obs,0}$ stands for the set of initial states.
\end{itemize}
\end{definition}

\begin{remark}\label{re:3.3}
The concurrent composition $Cc(\hat{G},\tilde{G}_{obs})$ in Definition~\ref{de:3.2} is a variant of that of~\cite{Han(2022)}.
The key differences between them are as follows: 1) set of initial states, and 2) composite objects.
Intuitively, $Cc(\hat{G},\tilde{G}_{obs})$ captures that for all $x_s\in X_S$ and all $s\in\mathcal{L}(G,x_s)$ whether there exists an observation $\alpha\in\mathcal{L}(\tilde{G}_{obs}, y)$ such that $P(s)=\alpha$ and the initial state $y$ of $\tilde{G}_{obs}$ satisfies $y=q\backslash X_S$ for some $q$ satisfying $x_s\in q\in X^{hyb}_{obs}$.
In addition, for a sequence $e\in\mathcal{L}(Cc(\hat{G},\tilde{G}_{obs}))$, we use the notations $e(L)$ and $e(R)$ to denote its left and right components, respectively.
Further, $P(e)$ denotes $P(e(L))$ or $e(R)$ because $P(e(L))=P(e(R))=e(R)$, which depends on the context.
\end{remark}

\subsection{Verification for strong $K$-step opacity}\label{subsec3.3}

In this subsection, we are ready to present the main result on the verification of $K$-SSO using the proposed concurrent composition $Cc(\hat{G},\tilde{G}_{obs})$.

\begin{theorem}\label{th:3.1}
Given a system $G=(X,\Sigma,\delta,X_0)$, a projection map $P$ w.r.t. the set $\Sigma_o$ of observable events, and a set $X_{S}$ of secret states,
let $Cc(\hat{G},\tilde{G}_{obs})$ be the corresponding concurrent composition.
$G$ is $K$-SSO w.r.t. $\Sigma_o$ and $X_S$ with $K\geq 1$ if and only if there exists no state of the form $(\cdot,\emptyset)$ in $Cc(\hat{G},\tilde{G}_{obs})$ that is observationally reachable from $\hat{X}_{cc,0}$ within $K$-steps.
\end{theorem}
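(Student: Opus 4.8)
The plan is to establish the biconditional by relating the combinatorial structure of runs in $Cc(\hat{G},\tilde{G}_{obs})$ to the quantified opacity condition in Definition~\ref{de:3.1}. The central observation is that a run in the concurrent composition simultaneously tracks two pieces of information: the left component $e(L)$ traces a run of $\hat{G}$ starting from a secret state $x_s\in X_S$ (thereby witnessing the ``secret'' run $x_0\stackrel{s}{\rightarrow}x_s\stackrel{t}{\rightarrow}x_t$ of the antecedent, via the suffix $t$ after reaching $x_s$), while the right component $e(R)$ evolves as an observer state of $\tilde{G}_{obs}$ and thus collects exactly the set of non-secret states reachable under the same observation $P(t)$. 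The empty right component $q=\emptyset$ is the crucial flag: it records that \emph{no} non-secret run exists producing the observed string, i.e.\ the existential witness in~\eqref{eq:3.1} has been lost. Accordingly, I would first set up a precise correspondence lemma: an observationally reachable state $(x,q)$ in $Cc(\hat{G},\tilde{G}_{obs})$ at observational depth $j$ corresponds to a secret-originating run whose observable projection has length $j$, where $x$ is a current reachable state of that run and $q$ is precisely the set of states reachable in $\tilde{G}$ along non-secret runs producing the same observation.

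For the \emph{contrapositive of necessity} (if a state $(\cdot,\emptyset)$ is observationally reachable within $K$ steps, then $G$ is not $K$-SSO), I would extract from such a reachable state a concrete run $x_0\stackrel{s}{\rightarrow}x_s\stackrel{t}{\rightarrow}x_t$ with $x_s\in X_S$ and $|P(t)|\le K$, using the initial-state definition $\hat{X}_{cc,0}$ to recover the prefix $s$ reaching the hybrid observer state $q\in X^{hyb}_{obs}$ with $x_s\in q$, and the left-component trajectory to recover $t$. The point is that $q=\emptyset$ in $\tilde{G}_{obs}$ means every candidate non-secret continuation has died out: there is no run $x'_s\stackrel{t'}{\rightarrow}x'_t$ that is non-secret with $P(t')=P(t)$ starting from a compatible non-secret state in the same hybrid observer cell. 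Here one must also verify that requiring $y=X_{NS}\cap q$ for a hybrid $q$ correctly captures \emph{all} initial states that could begin a non-secret alternative run consistent with the observation $P(s)$, which is why the observer $Obs(G)$ is used to define $\tilde{X}_{obs,0}$. This direction establishes that the existential clause of~\eqref{eq:3.1} fails.

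For the \emph{sufficiency} direction (if no such $(\cdot,\emptyset)$ state is reachable within $K$ steps, then $G$ is $K$-SSO), I would take an arbitrary secret run $x_0\stackrel{s}{\rightarrow}x_s\stackrel{t}{\rightarrow}x_t$ with $x_s\in X_S$ and $|P(t)|\le K$, lift it to a run in $Cc(\hat{G},\tilde{G}_{obs})$ reaching some state $(x_t,q)$ at observational depth $|P(t)|\le K$, and invoke the hypothesis to conclude $q\neq\emptyset$. Then any state in $q$ yields, by the observer-to-automaton correspondence for $\tilde{G}$, a genuinely non-secret run $x'_s\stackrel{t'}{\rightarrow}x'_t$ with $P(t')=P(t)$; combining this with a prefix $s'$ (with $P(s')=P(s)$) reaching the hybrid cell supplies the full witness required by~\eqref{eq:3.1}. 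The main obstacle I anticipate is the bookkeeping around the \emph{observational} reachability count and the hybrid initial states: one must argue carefully that the observational depth in $Cc(\hat{G},\tilde{G}_{obs})$ equals $|P(t)|$ rather than $|P(st)|$ (so that the bound $K$ applies to the correct suffix), and that by starting the right component at $X_{NS}\cap q$ for hybrid $q\in X^{hyb}_{obs}$ we neither miss any legitimate non-secret alternative nor spuriously introduce one — in other words, that the construction faithfully encodes the condition ``some non-secret run of equal observation survives throughout the last $K$ observable steps.'' Establishing this faithful encoding as the correspondence lemma is the technical heart; once it is in place, both directions follow by reading it in the two directions.
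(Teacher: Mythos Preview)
Your proposal is correct and follows essentially the same approach as the paper: both arguments hinge on the correspondence between runs of $Cc(\hat{G},\tilde{G}_{obs})$ and pairs consisting of a secret-originating run in $\hat{G}$ together with the surviving set of non-secret alternatives in $\tilde{G}_{obs}$. The only cosmetic differences are that the paper proves \emph{both} directions by contrapositive (your sufficiency argument is direct), and the paper carries out the correspondence inline rather than isolating it as a separate lemma; neither difference changes the substance. One point worth making explicit when you write it up: the lift of a secret run $x_0\stackrel{s}{\rightarrow}x_s$ to an initial state of $Cc(\hat{G},\tilde{G}_{obs})$ requires the observer state $q=\delta_{obs}(X_{obs,0},P(s))$ to lie in $X^{hyb}_{obs}$, which tacitly uses that $G$ is already $0$-SSO (otherwise $q$ could lie in $X^{s}_{obs}$ and $\hat{X}_{cc,0}$ would miss it); the paper relies on the same assumption via Remark~\ref{re:3.2} and the pre-check in Algorithm~\ref{algo:1}, but neither states it inside the proof.
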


\begin{proof}
$(\Rightarrow)$ By contrapositive, assume that there exists a state $(x,\emptyset)$ in $Cc(\hat{G},\tilde{G}_{obs})$ that is $k$-step observationally reachable from $\hat{X}_{cc,0}$, where $k\leq K$.
By the construction of $Cc(\hat{G},\tilde{G}_{obs})$, we have that there exists an initial state $(x_s,y)\in\hat{X}_{cc,0}$ and a sequence $e\in\mathcal{L}(Cc(\hat{G},\tilde{G}_{obs}),(x_s,y))$ with $|P(e)|=k$ such that $(x,\emptyset)\in\hat{\delta}_{cc}((x_s,y),e)$, where $x_s\in X_S\cap q$, $y=X_{NS}\cap q$, and $q\in X^{hyb}_{obs}$.
Further, By the constructions of $\hat{G}$ and $\tilde{G}_{obs}$, we have that: 1) $x\in\hat{\delta}(x_s,e(L))$, and 2) $\tilde{\delta}_{obs}(y,e(R))$ is not well-defined.
Item 1) means $x\in\delta(x_s,e(L))$.
Item 2) means, by the constructions of $\tilde{G}$, that for all $x^\prime_s\in y$ and all $t^\prime\in\tilde{\Sigma}^\ast$ with $P(t^\prime)=e(R)$, it holds $\tilde{\delta}(x^\prime_s,t^\prime)=\emptyset$.
On the other hand, since $\{x_s\}\cup y\subseteq q\in X^{hyb}_{obs}$, by the construction of $Obs(G)$, we conclude that: 1) in $G$ there exists an initial state $x_0\in X_0$ and a sequence $s\in\mathcal{L}(G,x_0)$ such that $x_s\in\delta(x_0,s)$,
and 2) for each $x^\prime_s\in y$, there exists an initial state $x^\prime_0\in X_0$ and a sequence $s^\prime\in\mathcal{L}(G,x^\prime_0)$ with $P(s^\prime)=P(s)$ such that $x^\prime_s\in\delta(x^\prime_0,s^\prime)$.
Therefore, for the run $x_0\stackrel{s}{\rightarrow}x_s\stackrel{e(L)}{\rightarrow}x$ generated by $G$ with $|P(e(L))|=k\leq K$,
there exists no run $x^\prime_0\stackrel{s^\prime}{\rightarrow}{x^\prime_s}\stackrel{t^\prime}{\rightarrow}{x^\prime}$ with $P(s^\prime)=P(s)$ and $P(t^\prime)=P(t)$ such that its subrun ${x^\prime_s}\stackrel{t^\prime}{\rightarrow}{x^\prime_t}$ is non-secret.
By Definition~\ref{de:3.1}, $G$ is not $K$-SSO w.r.t. $\Sigma_o$ and $X_S$.

$(\Leftarrow)$ Also by contrapositive, assume that $G$ is not $K$-SSO w.r.t. $\Sigma_o$ and $X_S$.
By Definition~\ref{de:3.1}, we conclude that in $G$: 1) there exists a run $x_0\stackrel{s}{\rightarrow}x_s\stackrel{t}{\rightarrow}x_t$, where $x_0\in X_0$, $x_s\in X_S$, and $P(t)=k\leq K$,
and 2) there exists no run $x^\prime_0\stackrel{s^\prime}{\rightarrow}{x^\prime_s}\stackrel{t^\prime}{\rightarrow}{x^\prime_t}$ such that ${x^\prime_s}\stackrel{t^\prime}{\rightarrow}{x^\prime_t}$ is non-secret, where $x^\prime_0\in X_0$, $P(s^\prime)=P(s)$, and $P(t^\prime)=P(t)$.
By the construction of $\hat{G}$, item 1) means $x_t\in\hat{\delta}(x_s,t)$.
By the construction of $\tilde{G}$, item 2) means $\tilde{\delta}(x^{\prime}_s,t^\prime)=\emptyset$ for all $x^\prime_s\in y:=X_{NS}\cap q$, where $x_s\in q\in X^{hyb}_{obs}$, $t^\prime\in\tilde{\Sigma}^\ast$, and $P(t^\prime)=P(t)$.
By the construction of  $\tilde{G}_{obs}$, we further obtain $\tilde{\delta}_{obs}(y,P(t^\prime))=\emptyset$.
Then, by the construction of $Cc(\hat{G},\tilde{G}_{obs})$, we conclude that there exists a sequence $e\in\mathcal{L}(Cc(\hat{G},\tilde{G}_{obs}))$ with $e(L)=t$ and $e(R)=P(t^\prime)$ such that $(x_t,\emptyset)\in\hat{\delta}_{cc}((x_s,y),e)$.
Since $|P(e)|=|P(e(L))|=|P(t)|=k$, state $(x_t,\emptyset)$ is $k$-step observationally reachable from $\hat{X}_{cc,0}$ in $Cc(\hat{G},\tilde{G}_{obs})$.
\end{proof}

Based on Theorem~\ref{th:3.1}, a verification procedure for $K$-SSO can be summed up as the following Algorithm~\ref{algo:1}.
\begin{algorithm}
\caption{Verification of $K$-SSO}\label{algo:1}
\begin{algorithmic}[1]
\REQUIRE A system $G=(X,\Sigma,\delta,X_0)$, a set $\Sigma_o$ of observable events, and a set $X_{S}$ of secret states.
\ENSURE ``Yes'' if $G$ is $K$-SSO w.r.t. $\Sigma_o$ and $X_S$, ``No" otherwise.
\STATE Compute the observer $Obs(G)$ of $G$
\IF{there exists a reachable state $q\in X_{obs}$ such that $q\subseteq X_S$}
\STATE $G$ is not $0$-SSO w.r.t. $\Sigma_o$ and $X_S$, return ``No"
\STOP
\ELSE
\STATE Construct the initial-secret subautomaton $\hat{G}$ of $G$
\STATE Construct the non-secret subautomaton $\tilde{G}$ of $G$
\STATE Compute the observer $\tilde{G}_{obs}$ of $\tilde{G}$
\STATE Compute the corresponding $Cc(\hat{G},\tilde{G}_{obs})$
\STATE Use the ``Breadth-First Search Algorithm" in~\cite{Cormen(2009)} to find whether there exists a state of form $(\cdot,\emptyset)$ in $Cc(\hat{G},\tilde{G}_{obs})$ that is observationally reachable from $\hat{X}_{cc,0}$ within $K$-steps
\IF{such a state $(\cdot,\emptyset)$ in $Cc(\hat{G},\tilde{G}_{obs})$ exists}
\RETURN ``No", stop
\ELSE
\RETURN ``Yes", stop
\ENDIF
\ENDIF
\end{algorithmic}
\end{algorithm}

\begin{remark}\label{re:3.4}
We highlight the main advantages of using the proposed concurrent composition $Cc(\hat{G},\tilde{G}_{obs})$ to determine $K$-SSO compared with the existing algorithm in~\cite{Ma(2021)}.
First, computing $Obs(G)$, $\hat{G}$, $\tilde{G}$, $\tilde{G}_{obs}$, and $Cc(\hat{G},\tilde{G}_{obs})$ take time $\mathcal{O}(|\Sigma_o||\Sigma_{uo}||X|^{2}2^{|X|})$, $\mathcal{O}(|\Sigma||X|^2)$, $\mathcal{O}(|\Sigma||X|^2+2^{|X|})$,
$\mathcal{O}(|\Sigma_o||\Sigma_{uo}||X|^{2}2^{|X|})$, and $\mathcal{O}(|\Sigma||X|^{2}2^{|X|})$, respectively.
Hence, the overall (worst-case) time complexity of verifying $K$-SSO using Algorithm~\ref{algo:1} is $\mathcal{O}((|\Sigma_o||\Sigma_{uo}|+|\Sigma|)|X|^{2}2^{|X|})$.
In comparison, the algorithm in~\cite{Ma(2021)} using $K$-step recognizer has time complexity $\mathcal{O}(|\Sigma_o||\Sigma_{uo}||X|^{2}2^{(K+2)|X|})$.
Therefore, our proposed algorithm leads to a considerable improvement compared with that in~\cite{Ma(2021)}.
Second, by Theorem~\ref{th:3.1}, we know that the proposed concurrent composition $Cc(\hat{G},\tilde{G}_{obs})$ for determining $K$-SSO does not depend on the value of $K$, whereas the algorithm in~\cite{Ma(2021)} depends on the value of $K$.
In other words, for each given $K$, it needs to construct the corresponding recognizer, see~\cite{Ma(2021)} for details.
\end{remark}

\begin{example}[\cite{Ma(2021)}]\label{ex:3.1}
Let us consider the system $G$ shown in Fig.~\ref{Fig1} in which the set of secret states is $X_S=\{5,7\}$.
By applying Algorithm~\ref{algo:1}, we obtain the corresponding $Obs(G)$, $\hat{G}$, $\tilde{G}$, $\tilde{G}_{obs}$, and $Cc(\hat{G},\tilde{G}_{obs})$, which are depicted in~Fig.~\ref{Fig2}.
In $Cc(\hat{G},\tilde{G}_{obs})$ there exists a state $(8,\emptyset)$ that is $2$-step observationally reachable from the initial-state $(7,\{1,2,3,4\})$.
Therefore, by Theorem~\ref{th:3.1}, $G$ is $1$-SSO w.r.t. $\Sigma_o$ and $X_S$, but not $K$-SSO for any $K>1$.
This conclusion coincides with that obtained in~\cite{Ma(2021)}.
\begin{figure}[!ht]
  \centering
  \includegraphics[scale=0.76]{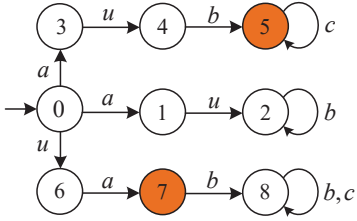}
  \caption{The system $G$ considered in Example~\ref{ex:3.1}, where $\Sigma_o=\{a,b,c\}$, $\Sigma_{uo}=\{u\}$, and $X_0=\{0\}$.}
  \label{Fig1}
\end{figure}

\begin{figure}[!ht]
  \centering
  \includegraphics[scale=0.92]{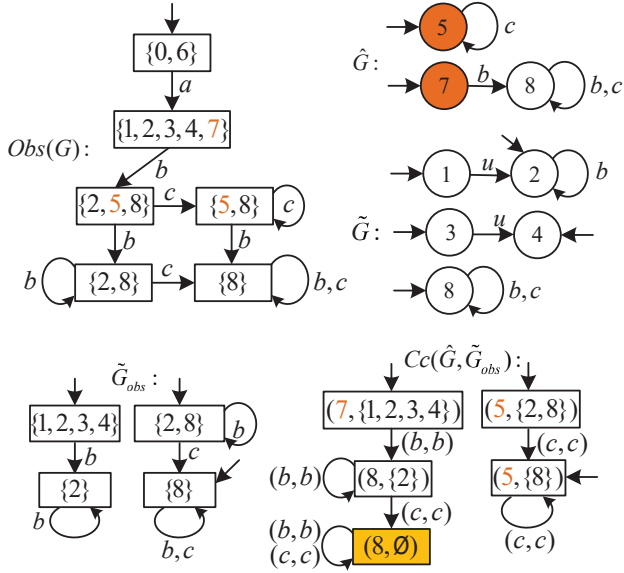}
  \caption{The constructed automata: $Obs(G)$, $\hat{G}$, $\tilde{G}$, $\tilde{G}_{obs}$, and $Cc(\hat{G},\tilde{G}_{obs})$ for the system $G$ in Fig.~\ref{Fig1}.}
  \label{Fig2}
\end{figure}
\end{example}

\subsection{An improved upper bound on $K$ in strong $K$-step opacity}\label{subsec3.4}

In~\cite{Ma(2021)}, the authors proposed a notion of strong infinite-opacity (Inf-SSO) and investigated its verification.
Recently, our previous work reduces the time complexity of~\cite{Ma(2021)} for determining Inf-SSO, see~\cite{Han(2022)} for details.
Furthermore, the authors in~\cite{Ma(2021)} shown that Inf-SSO and $(|X|(2^{|X|}-1))$-SSO are equivalent.
This indicates that $(|X|(2^{|X|}-1))$ is an upper bound on the value of $K$ in $K$-SSO.
In fact, this upper bound is conservative.
In this subsection, we propose a new upper bound on the value of $K$ in $K$-SSO by using the proposed concurrent composition $Cc(\hat{G},\tilde{G}_{obs})$, which is smaller than that in~\cite{Ma(2021)}.
And then, we establish an equivalent relationship between $K$-SSO and Inf-SSO.

Now we are ready to derive the new upper bound on the value of $K$ in $K$-SSO.
Specifically, by Definition~\ref{de:3.1}, we conclude readily that if a system $G$ is $K$-SSO w.r.t. $\Sigma_o$ and $X_S$, then it is also $K^\prime$-SSO for any $K^\prime\leq K$.
Conversely, if $G$ is not $K$-SSO with $K> |\hat{X}|2^{|X\backslash X_S|}-1$, by Definition~\ref{de:3.1},
we can know that in $G$: 1) there exists a run $x_0\stackrel{s}{\rightarrow}x_s\stackrel{t}{\rightarrow}x_t$, where $x_0\in X_0$, $x_s\in X_S$, and $P(t)=k\leq K$,
and 2) there exists no run $x^\prime_0\stackrel{s^\prime}{\rightarrow}{x^\prime_s}\stackrel{t^\prime}{\rightarrow}{x^\prime_t}$ such that ${x^\prime_s}\stackrel{t^\prime}{\rightarrow}{x^\prime_t}$ is non-secret, where $x^\prime_0\in X_0$, $P(s^\prime)=P(s)$, and $P(t^\prime)=P(t)$.
Thus, we can obtain that state $(x_t,\emptyset)$ in $Cc(\hat{G},\tilde{G}_{obs})$ is $k$-step observationally reachable from $\hat{X}_{cc,0}$ (see ``$\Leftarrow$" part in the proof of Theorem~\ref{th:3.1}).
Since $Cc(\hat{G},\tilde{G}_{obs})$ has at most $|\hat{X}|2^{|X\backslash X_S|}$ states, there exists a $k^\prime\leq|\hat{X}|2^{|X\backslash X_S|}-1$ such that $(x_t,\emptyset)$ is $k^\prime$-step observationally reachable from $\hat{X}_{cc,0}$.
By Theorem~\ref{th:3.1}, $G$ is not $k^\prime$-SSO.
Hence, it is not $(|\hat{X}|2^{|X\backslash X_S|}-1)$-SSO.
This means that the result of determining $K$-SSO using Theorem~\ref{th:3.1} does not depend on the value of $K$ when $K\geq|\hat{X}|2^{|X\backslash X_S|}-1$.
Therefore, a new upper bound on $K$ in $K$-SSO is $|\hat{X}|2^{|X\backslash X_S|}-1$, which reduces the previous upper bound of $|X|(2^{|X|}-1)$ derived in~\cite{Ma(2021)} when the size of system $G$ is relatively large.

The following two corollaries improve the corresponding results (cf., Theorem 5.2 and Corollary 5.1) in~\cite{Ma(2021)}.
Note that, we here omit their proofs, since they can be directly obtained from Definition~\ref{de:3.1} and Theorem~\ref{th:3.1}.

\begin{corollary}\label{cor:3.1}
Given a system $G=(X,\Sigma,\delta,X_0)$, a projection map $P$ w.r.t. the set $\Sigma_o$ of observable events, and a set $X_{S}$ of secret states, $G$ is $K$-SSO w.r.t. $\Sigma_o$ and $X_S$ if and only if it is min$\{K, |\hat{X}|2^{|X\backslash X_S|}-1\}$-SSO w.r.t. $\Sigma_o$ and $X_S$.
\end{corollary}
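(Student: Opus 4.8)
The plan is to dispose of the easy case first and then isolate the only substantive one. If $K\le|\hat{X}|2^{|X\backslash X_S|}-1$, then $\min\{K,|\hat{X}|2^{|X\backslash X_S|}-1\}=K$, and the claimed equivalence is the tautology ``$G$ is $K$-SSO iff $G$ is $K$-SSO.'' Hence the work is entirely in the regime $K>|\hat{X}|2^{|X\backslash X_S|}-1$, where $\min\{K,|\hat{X}|2^{|X\backslash X_S|}-1\}=|\hat{X}|2^{|X\backslash X_S|}-1$, and I must establish that $G$ is $K$-SSO if and only if it is $(|\hat{X}|2^{|X\backslash X_S|}-1)$-SSO.

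For the forward implication I would invoke the monotonicity of $K$-SSO in $K$ recorded just before the corollary: because the universally quantified premise of~\eqref{eq:3.1} ranges over runs with $|P(t)|\le K$, decreasing the bound can only delete constraints, so $K$-SSO implies $K'$-SSO for every $K'\le K$. Taking $K'=|\hat{X}|2^{|X\backslash X_S|}-1$ gives the forward direction at once, and its contrapositive (``not $k'$-SSO for some $k'\le M$ implies not $M$-SSO'') is the tool I reserve for the converse.

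For the reverse implication I would argue by contraposition, showing that if $G$ is not $K$-SSO then it is not $(|\hat{X}|2^{|X\backslash X_S|}-1)$-SSO. By Theorem~\ref{th:3.1}, failure of $K$-SSO produces a state of the form $(\cdot,\emptyset)$ in $Cc(\hat{G},\tilde{G}_{obs})$ that is $k$-step observationally reachable from $\hat{X}_{cc,0}$ for some $k\le K$. The key step is to compress this witness: since $\hat{X}_{cc}\subseteq\hat{X}\times 2^{\tilde{X}}$ contains at most $|\hat{X}|2^{|X\backslash X_S|}$ states, a shortest-observation/pigeonhole argument (excising a cycle between two coinciding states along the witnessing run) yields some $k'\le|\hat{X}|2^{|X\backslash X_S|}-1$ for which $(\cdot,\emptyset)$ is still $k'$-step observationally reachable; this is precisely the reduction carried out in the paragraph preceding the corollary. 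Applying Theorem~\ref{th:3.1} in the reverse direction then shows $G$ is not $k'$-SSO, and the monotonicity tool from the previous paragraph upgrades this to ``$G$ is not $(|\hat{X}|2^{|X\backslash X_S|}-1)$-SSO,'' completing the contrapositive.

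The main obstacle is the compression step, and the care it requires is bookkeeping rather than depth: since ``observationally reachable within $k$ steps'' counts \emph{observable} transitions only, the pigeonhole must be applied to the states visited at the observation boundaries (the states reached after each observable pair-event), not to every intermediate state reached under interspersed unobservable events, and one must verify that deleting the segment between two coinciding boundary states preserves both reachability from $\hat{X}_{cc,0}$ and the terminal $(\cdot,\emptyset)$ shape of the target. Once this is settled, the corollary is, as indicated, a direct consequence of Definition~\ref{de:3.1} and Theorem~\ref{th:3.1}.
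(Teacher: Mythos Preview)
Your proposal is correct and follows essentially the same route the paper takes: the paper explicitly omits a formal proof of Corollary~\ref{cor:3.1}, but the argument laid out in the paragraph immediately preceding it is exactly your monotonicity-plus-pigeonhole compression via Theorem~\ref{th:3.1}, and your added care about applying the pigeonhole at observation boundaries (rather than at every intermediate state) is the right refinement of what the paper leaves implicit.
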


\begin{corollary}\label{cor:3.2}
Given a system $G=(X,\Sigma,\delta,X_0)$, a projection map $P$ w.r.t. the set $\Sigma_o$ of observable events, and a set $X_{S}$ of secret states, $G$ is Inf-SSO w.r.t. $\Sigma_o$ and $X_S$ if and only if it is $(|\hat{X}|2^{|X\backslash X_S|}-1)$-SSO w.r.t. $\Sigma_o$ and $X_S$.
\end{corollary}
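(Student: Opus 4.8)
The plan is to reduce both directions to the observational reachability of states of the form $(\cdot,\emptyset)$ in $Cc(\hat{G},\tilde{G}_{obs})$ via Theorem~\ref{th:3.1}, and then to exploit the finiteness of its state set. Throughout I write $N:=|\hat{X}|2^{|X\backslash X_S|}$, so that $Cc(\hat{G},\tilde{G}_{obs})$ has at most $N$ states and the claimed threshold is $N-1$. I take as the working characterization of Inf-SSO that $G$ is Inf-SSO if and only if it is $K$-SSO for every $K\in\mathbb{N}$, and I use the monotonicity already noted from Definition~\ref{de:3.1}: if $G$ is $K$-SSO then it is $K'$-SSO for every $K'\leq K$.

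The forward implication is immediate: if $G$ is Inf-SSO it is $K$-SSO for all $K$, in particular for $K=N-1$. For the converse I would argue by contraposition, showing that if $G$ is \emph{not} Inf-SSO then it is not $(N-1)$-SSO. If $G$ is not Inf-SSO, there is some $K$ for which it fails $K$-SSO. When this $K$ equals $0$ the claim follows directly from monotonicity (the contrapositive of $(N-1)$-SSO $\Rightarrow$ $0$-SSO); so assume $K\geq 1$. By Theorem~\ref{th:3.1} there is a state $(x_t,\emptyset)$ observationally reachable from $\hat{X}_{cc,0}$ within $K$ steps, and I fix a witnessing run $e$.

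The core step trades the bound $K$ for $N-1$ by a shortest-path argument. I would pass to the observable-reachability graph on the state set of $Cc(\hat{G},\tilde{G}_{obs})$, with an edge $z\to z'$ whenever $z'$ is reachable from $z$ by a string carrying exactly one observable event; then ``observationally reachable within $k$ steps'' is exactly reachability by a path of length at most $k$ in this graph, and the $0$-step reachable set is the unobservable closure of $\hat{X}_{cc,0}$. The run $e$ induces a walk $z_0,z_1,\ldots,z_k$ with $z_k=(x_t,\emptyset)$ and $k=|P(e)|\leq K$; removing any repeated vertex (loop excision is valid because transitions depend only on the current state) turns it into a simple path with the same endpoints. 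A simple path visits at most $|\hat{X}_{cc}|\leq N$ distinct vertices, hence has length at most $N-1$, so some $(\cdot,\emptyset)$ state is observationally reachable within $N-1$ steps. By Theorem~\ref{th:3.1} once more, $G$ is not $(N-1)$-SSO, completing the contraposition. Corollary~\ref{cor:3.1} then follows as the $\min$-form restatement, and the improvement over the bound $|X|(2^{|X|}-1)$ of~\cite{Ma(2021)} is immediate from $N-1=|\hat{X}|2^{|X\backslash X_S|}-1$.

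The step I expect to be the main obstacle is making the loop excision rigorous while counting \emph{observable} steps rather than raw transitions: the milestone vertices must be placed so that deleting a cycle genuinely removes observable events and leaves an admissible run of $Cc(\hat{G},\tilde{G}_{obs})$ whose endpoint is still of the form $(\cdot,\emptyset)$. Here the definition of $\hat{\delta}_{cc}$ in the case $q=\emptyset$ is convenient, since it shows that once the right component is emptied it stays empty; the endpoint type is therefore preserved automatically, and one only has to account for the number of observable events removed.
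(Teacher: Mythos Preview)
Your proposal is correct and follows essentially the same approach as the paper. The paper omits an explicit proof of Corollary~\ref{cor:3.2} but sketches the argument in Section~\ref{subsec3.4}: monotonicity of $K$-SSO in $K$, reduction to the existence of a $(\cdot,\emptyset)$ state in $Cc(\hat{G},\tilde{G}_{obs})$ via Theorem~\ref{th:3.1}, and then the observation that this structure has at most $|\hat{X}|2^{|X\backslash X_S|}$ states so any such state is already observationally reachable within $|\hat{X}|2^{|X\backslash X_S|}-1$ steps---exactly your shortest-path/loop-excision argument, which you have simply made more explicit via the observable-reachability graph.
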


\section{Concluding remarks}\label{sec4}

In this paper, we revisited the verification of strong $K$-step opacity for partially-observed discrete-event systems.
A new concurrent-composition structure was proposed.
Using it, we provided an improved verification algorithm for strong $K$-step opacity, which has time complexity $\mathcal{O}((|\Sigma_o||\Sigma_{uo}|+|\Sigma|)|X|^{2}2^{|X|})$
compared with time complexity $\mathcal{O}(|\Sigma_o||\Sigma_{uo}|$ $2^{|X|+|X|^2})$ (resp., $\mathcal{O}(|\Sigma_o||\Sigma_{uo}|$ $|X|^{2}2^{(K+2)|X|})$) of the previous algorithm in~\cite{Falcone(2015)} (resp., \cite{Ma(2021)}).
Furthermore, we derived a new upper bound of $|\hat{X}|2^{|X\backslash X_S|}-1$ in strong $K$-step opacity, which also reduces the previous upper bound of $|X|(2^{|X|}-1)$ given in~\cite{Ma(2021)}.

In the future, we plan to exploit the proposed concurrent-composition approach to do more efficient enforcement for strong K-step opacity compared with the enforcement algorithms obtained in~\cite{Ma(2021)}.
It would be of interest to extend the previously-proposed approach~\cite{Han(2022)} to design algorithms for enforcing strong current-state opacity, strong initial-state opacity, and strong infinite-step opacity.


\bibliographystyle{plain}

\end{document}